\newtheorem{theorem}{Theorem}
\newtheorem{proposition}{Proposition}
\newtheorem{corollary}{Corollary}
\begin{document}

\title{Power Distribution of Device-to-Device Communications in Underlaid Cellular Networks}
\author{
\IEEEauthorblockN{Morteza Banagar, Behrouz Maham, \IEEEmembership{Senior Member, IEEE}, Petar Popovski, \IEEEmembership{Senior Member, IEEE}, and\\Francesco Pantisano, \IEEEmembership{Member, IEEE}}\\
\thanks{Morteza Banagar is with the department of ECE, University of Tehran, Iran. Behrouz Maham is with Nazarbayev University, Astana, Kazakhstan. Petar Popovski is with the Department of Electronic Systems, Aalborg University, Aalborg, Denmark. Francesco Pantisano is with JRC -- Joint Research Centre, European Commission, Ispra, Italy.}\vspace{-1.5cm}}
\maketitle


\begin{abstract}
Device-to-device (D2D) communications have recently emerged as a novel transmission paradigm in wireless cellular networks. D2D transmissions take place concurrently with the usual cellular connections, and thus, controlling the interference brought to the macro-cellular user equipment (UE) is of vital importance. In this paper, we consider the uplink transmission of a tier of D2D users that operates as an underlay for the traditional cellular network. Using network model based on stochastic geometry, we derive the \emph{equilibrium} cumulative distribution function (CDF) of the D2D transmit power. Considering interference-limited and relatively lossy environment cases, closed form equations are derived for the power CDF. Finally, a tight closed-form upper-bound for the derived power distribution is proposed, and the analytical results are validated via simulation.
\end{abstract}
\begin{IEEEkeywords}
D2D communication, power distribution, underlaid cellular network, stochastic geometry.
\end{IEEEkeywords}

\IEEEpeerreviewmaketitle
\section{Introduction} \label{SectIntro}

The basic motivation for using Device-to-device (D2D) communication in cellular systems is to enable communication among proximate devices with low latency and low energy consumption, while improving the spectral efficiency \cite{Mag_Doppler_D2D,Mag_Lin_AnOverview, W_Corson_Toward}. Furthermore, it is an efficient paradigm for offloading traffic from macro base stations (BS). D2D communication among proximate user equipment (UE) can occur with or without coordination from the infrastructure (BSs). In this paper we treat the case in which the D2D transmissions are not coordinated by the BS, and instead each UE selects its transmit power in an autonomous way. 

Statistical analysis of the transmit power of D2D users has a key role in determining the capacity and performance of D2D-underlaid cellular networks. In this paper, we first analyze the distribution of the transmit power of D2D users in an underlaid cellular network using stochastic geometry. The spectrum sharing scenario in our model is underlay in-band, i.e., macro-cellular UEs and D2D users concurrently operate over the same licensed spectrum band. In such a model, the transmission of D2D users may interfere with the macro-cellular UEs, making our analysis differ from \cite{Lett_Erturk_Distributions}, which considers an overlay scenario. We derive the cumulative distribution function (CDF) of the transmit power of D2D users, which represents an ``equilibrium" distribution, i.e., a distribution of the D2D transmit powers when all D2D links have attained the target signal-to-interference-plus-noise ratio (SINR). In addition, we consider two special cases of the transmit power CDF, i.e., interference-limited and relatively lossy environments. A tight closed-form upper-bound for the power distribution is also computed.
Once the equilibrium distribution is found, the question is which power control algorithm should be used by each node  in order to attain it. The recent literature features several power control algorithms in D2D underlay cellular networks \cite{CoRR_Asadi_D2DSurvey, CoRR_Lee_Power}. The location of UEs (and sometimes BSs) are random in actual networks, and thus, a suitable stochastic model should be used in order to analyze such random networks. In this paper, we use the Foschini-Miljanic power control algorithm \cite{Trans_Foschini_ASimple}, in which the power allocated to each node is updated in a distributed fashion so as to reach the SINR threshold at all D2D links. Simulation results demonstrate the validity of our analysis.


\section {System Model} \label{SectSysMod}

Consider a set of macro-cellular UEs and a set of D2D users operating on the same frequency band. The locations of macro-cellular UEs and D2D users are modeled as independent homogeneous Poisson point processes (PPP) $\Phi_{\rm C}$ and $\Phi_{\rm D}$ with constant intensities $\lambda_{\rm C}$ and $\lambda_{\rm D}$, respectively. The transmitting powers of macro-cellular UEs are assumed to be independent and identically distributed (i.i.d.) and independent of the transmitting powers of D2D users, which are also considered to be i.i.d. \cite{Lett_Erturk_Distributions}. All links are assumed to be affected by i.i.d. Rayleigh fading which are denoted by $h$. Hence, the fading power has an inversely exponential distribution with mean $\mu$. The target SINR for establishing a connection in cellular and D2D domains are supposed to be $\beta_{\rm C}$ and $\beta_{\rm D}$, respectively. Path loss has a standard form of $\left\|x\right\|^{-\alpha}$ for a transmitter-receiver distance of $x$ and a path-loss exponent of $2<\alpha<6$.

We suppose that each D2D receiver only connects to the nearest D2D transmitter and each D2D transmitter communicates with only one D2D receiver. Without loss of generality, we focus our analysis on a {\em typical} D2D link located at the origin $\mathcal{O}$ and denote it as subscript $0$. In this setting, the received power in a D2D link can be expressed as
\begin{equation} \label{Pr0}
P_{\rm r}=P_{0,{\rm D}}\left|h_{0,{\rm D}}\right|^2\left\|x_{0,{\rm D}}\right\|^{-\alpha},
\end{equation}
where $P_{0,{\rm D}}$, $\left|h_{0,{\rm D}}\right|^2$, and $\left\|x_{0,{\rm D}}\right\|$ are the power of the typical D2D transmitter, the fading power in the typical D2D transceiver channel, and the distance between the typical D2D transmitter and receiver, respectively. The received power from all other macro-cellular UEs (cellular interference) and D2D transmitters (D2D interference) at the typical D2D receiver can be written as follows:
\begin{align} \label{ICID}
I_{\rm C} &= \sum_{i\in \Phi_{\rm C}}P_{i,{\rm C}}\left|h_{i,{\rm C}}\right|^2\left\|x_{i,{\rm C}}\right\|^{-\alpha}, \\
I_{\rm D} &= \sum_{i\in \left\{\Phi_{\rm D}\backslash\mathcal{O}\right\}}P_{i,{\rm D}}\left|h_{i,{\rm D}}\right|^2\left\|x_{i,{\rm D}}\right\|^{-\alpha},
\end{align}
where $P_{i}$, $\left|h_{i}\right|^2$, and $\left\|x_{i}\right\|$ denote the transmit power, the fading power, and the locations of macro-cellular UEs (subscript C) and D2D users (subscript D), respectively. Since the transmit power, the fading power, and the location of macro-cellular UEs and D2D users are all independent of each other, $I_{\rm C}$ and $I_{\rm D}$ are also independent. The noise power is denoted by $\sigma^2$. Furthermore, we consider no out-of-cell interference in our analysis. The SINR at the typical D2D receiver is given by:
\begin{equation}
\mathrm{SINR}=\frac{P_{\rm r}}{I_{\rm D}+I_{\rm C}+\sigma^2}.
\end{equation}


The Foschini-Miljanic algorithm \cite{Trans_Foschini_ASimple} is used as the power controlling method, which reaches the SINR threshold at all links in a distributed fashion. The power of the $i^{\rm th}$ D2D transmitter at time $k+1$ is given by:
\begin{equation} \label{UpdateFM}
P_{i,{\rm D}}\left(k+1\right)=\left(1-\gamma\right)P_{i,{\rm D}}\left(k\right)\left[ 1+\frac{\gamma}{1-\gamma}\frac{\beta_{\rm D}}{\beta_i} \right],
\end{equation}
where $\gamma$ is the convergence rate constant and $\beta_i$ is the SINR of the $i^{\rm th}$ D2D link. Clearly, the algorithm converges when $\beta_{\rm D}=\beta_i$. Each node should know the value of $\gamma$, $\beta_{\rm D}$, and $\beta_i$ at each time $k$ in order to run the algorithm. Equation \eqref{UpdateFM} is used for power allocation in our simulations.

\section{Power Distribution of D2D-Underlaid Cellular Network} \label{SectPowDist}

In order to establish a connection between a D2D pair, the received SINR at the D2D receiver must be greater than a minimum SINR threshold $\beta_{\rm D}$. In this setting, we aim at minimizing the power allocation at each D2D link, while meeting the minimum SINR requirement $\beta_D$, i.e.,
\begin{equation} \label{EqSINR}
\mathrm{SINR} \geq \beta_{\rm D}  \Longleftrightarrow P_{0,{\rm D}} \geq \frac{\beta_{\rm D} \left\|x_{0,{\rm D}}\right\|^{\alpha} \left(I_{\rm D}+I_{\rm C}+\sigma^2\right)}{\left|h_{0,{\rm D}}\right|^2}.
\end{equation}
The following theorem gives the general distribution for the unconstrained D2D transmit power.

\begin{theorem} \label{Theorem1}
In a D2D-underlaid cellular network with multiple macro-cellular UEs and D2D users, the converged CDF of the transmit power of D2D users is
\begin{equation} \label{EqDistD2D}
\mathbb{P}\left\{P_{\rm D}\leq p\right\} = \mathbb{P}\left\{P_{0,{\rm D}}\leq p\right\} =\int_0^{\infty} {\rm e}^{-k_1 r- k_2 r^{\alpha/2}} \, \mathrm{d}r,
\end{equation}
where
\small
\begin{equation} \label{Constants1}
k_1 \!=\! 1+ \frac{1}{\mathrm{sinc}\!\left(2/{\alpha}\right)}\! \left(\!\frac{\beta_{\rm D}}{\mu p}\!\right)^{\!\!2/{\alpha}}\!\!\! \left(\!\!E_{\rm D}\!+\!\!\left(\!\frac{\lambda_{\rm C}}{\lambda_{\rm D}}\!\right)\!\! E_{\rm C}\!\right)\!\!,~~
k_2 \!=\! \frac{\beta_{\rm D}}{\mu p}\frac{\sigma^2}{\left(\pi\lambda_{\rm D}\right)^{\alpha/2}},
\end{equation}
\normalsize
and $E_{\rm C}=\mathbb{E} \left \{ P_{\rm C}^{2/{\alpha}} \right \}$, $E_{\rm D}=\mathbb{E} \left \{ P_{\rm D}^{2/{\alpha}} \right \}$. Moreover, we have
\small
\begin{equation} \label{ED}
E_{\rm D}\!\!=\!\!\int_0^{\infty}\!\!\left(\!A_1\!\left(\!A_2\!+\!E_{\rm D}\!\right)\!x^{-1}\!\! +\!\! A_3 x^{-2/\alpha}\!\right)\!{\rm e}^{-\frac{\alpha}{2}A_1\!\left(\!A_2+E_{\rm D}\right)x^{2/\alpha}-A_3 x}\mathrm{d}x,
\end{equation}
\normalsize
where
\begin{equation*}
A_1\! =\! \left(\!\frac{\beta_{\rm D}}{\mu}\!\right)^{\!\!2/{\alpha}}\!\!\!\!\!\!\frac{2/\alpha}{\mathrm{sinc}\left(2/{\alpha}\right)}, ~~
A_2 \!=\! \left( \!\frac{\lambda_{\rm C}}{\lambda_{\rm D}} \!\right )\!E_{\rm C},~~
A_3 \!=\! \frac{\beta_{\rm D}\sigma^2}{\mu\left(\pi \lambda_{\rm D}\right)^{2/\alpha}}.
\end{equation*}

\end{theorem}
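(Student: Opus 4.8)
The plan is to interpret the converged transmit power as the \emph{minimum} power meeting the SINR constraint. The Foschini--Miljanic recursion \eqref{UpdateFM} has its fixed point exactly when $\beta_i=\beta_{\rm D}$ at every link, so at equilibrium \eqref{EqSINR} holds with equality and
\begin{equation*}
P_{0,{\rm D}}=\frac{\beta_{\rm D}\left\|x_{0,{\rm D}}\right\|^{\alpha}\left(I_{\rm D}+I_{\rm C}+\sigma^2\right)}{\left|h_{0,{\rm D}}\right|^{2}}.
\end{equation*}
Setting $R=\left\|x_{0,{\rm D}}\right\|$ and conditioning on $R$, $I_{\rm D}$ and $I_{\rm C}$, the event $\{P_{0,{\rm D}}\leq p\}$ becomes $\{\left|h_{0,{\rm D}}\right|^{2}\geq\beta_{\rm D}R^{\alpha}(I_{\rm D}+I_{\rm C}+\sigma^2)/p\}$; since $\left|h_{0,{\rm D}}\right|^{2}$ is exponential, its conditional probability is $\exp(-s(I_{\rm D}+I_{\rm C}+\sigma^2))$ with $s=\beta_{\rm D}R^{\alpha}/(\mu p)$. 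Averaging, and exploiting the mutual independence of $R$, $I_{\rm D}$ and $I_{\rm C}$ (i.e.\ treating the interfering field as an independent full-plane PPP, neglecting the guard region induced by nearest-transmitter association), factorizes the CDF as $\mathbb{E}_{R}[\mathrm{e}^{-s\sigma^2}\,\mathcal{L}_{I_{\rm D}}(s)\,\mathcal{L}_{I_{\rm C}}(s)]$, where $\mathcal{L}$ denotes a Laplace transform.

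I would then evaluate the two interference transforms through the probability generating functional of a homogeneous PPP on $\mathbb{R}^2$. Passing to polar coordinates and using $\int_0^{\infty}(1-\mathrm{e}^{-a u^{-\alpha}})u\,\mathrm{d}u=\tfrac{1}{2}\Gamma(1-2/\alpha)a^{2/\alpha}$ gives $\mathcal{L}_{I}(s)=\exp(-\pi\lambda\,\Gamma(1-2/\alpha)\,\mathbb{E}[(P|h|^{2})^{2/\alpha}]\,s^{2/\alpha})$ for each tier, and the reflection identity $\Gamma(1-2/\alpha)\Gamma(1+2/\alpha)=1/\mathrm{sinc}(2/\alpha)$ packs the fading and path-loss constants into the $\mathrm{sinc}$ factor, producing the $E_{\rm D}$ and $(\lambda_{\rm C}/\lambda_{\rm D})E_{\rm C}$ contributions. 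Because $R$ is the nearest-neighbour distance of $\Phi_{\rm D}$ it has density $2\pi\lambda_{\rm D}\rho\,\mathrm{e}^{-\pi\lambda_{\rm D}\rho^{2}}$, so writing $\mathbb{E}_{R}[\cdot]$ as an integral and substituting $r=\pi\lambda_{\rm D}\rho^{2}$ merges the Rayleigh weight $\pi\lambda_{\rm D}\rho^{2}$ (the constant ``$1$'' in $k_1$), the two interference exponents, which are proportional to $\rho^{2}$, and the noise exponent $s\sigma^2\propto\rho^{\alpha}$ (which becomes $k_2 r^{\alpha/2}$), yielding \eqref{EqDistD2D} and \eqref{Constants1}.

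The decisive step, and the one I expect to be hardest, is closing the self-consistency created by $I_{\rm D}$: the interfering powers $P_{i,{\rm D}}$ are i.i.d.\ copies of the unknown $P_{0,{\rm D}}$, so $\mathcal{L}_{I_{\rm D}}$ — and hence the CDF itself — depends on $E_{\rm D}=\mathbb{E}[P_{\rm D}^{2/\alpha}]$, which is the $2/\alpha$-moment of that very CDF. To pin $E_{\rm D}$ down I would take the $2/\alpha$-moment of the displayed expression for $P_{0,{\rm D}}$ and factor it, using $\mathbb{E}[R^{2}]=1/(\pi\lambda_{\rm D})$ and the negative fading moment $\mathbb{E}[(|h_{0,{\rm D}}|^{2})^{-2/\alpha}]$ supplied by the exponential law, leaving the fractional moment $\mathbb{E}[(I_{\rm D}+I_{\rm C}+\sigma^2)^{2/\alpha}]$ of the aggregate interference-plus-noise. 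The latter follows from its Laplace transform $\mathrm{e}^{-\sigma^2 t-B t^{2/\alpha}}$, with $B$ again proportional to $E_{\rm D}+(\lambda_{\rm C}/\lambda_{\rm D})E_{\rm C}$, via the identity $\mathbb{E}[Z^{2/\alpha}]=\tfrac{2/\alpha}{\Gamma(1-2/\alpha)}\int_0^{\infty}t^{-2/\alpha-1}(1-\mathbb{E}[\mathrm{e}^{-tZ}])\,\mathrm{d}t$. An integration by parts followed by a rescaling $t\mapsto x$ that normalizes the noise coefficient to $A_3$ then recasts this into an implicit fixed-point equation for $E_{\rm D}$ of the form \eqref{ED} (with $E_{\rm C}$ treated as a given input). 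The genuine obstacles are twofold: keeping the $\Gamma$/$\mathrm{sinc}$ and scaling constants straight through this reduction, and justifying the mean-field closure that lets one treat $\{P_{i,{\rm D}}\}$ as i.i.d.\ samples from the equilibrium marginal — legitimate precisely under the i.i.d.\ transmit-power assumption imposed in the system model, and which is what makes \eqref{ED} a well-posed equation determining $E_{\rm D}$.
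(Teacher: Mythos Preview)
Your derivation of the CDF \eqref{EqDistD2D} is essentially identical to the paper's: the same equilibrium identification via \eqref{EqSINR}, the exponential tail of $|h_{0,{\rm D}}|^2$ to produce Laplace transforms, the PGFL evaluation with the $\mathrm{sinc}$ constant, the explicit acknowledgment of the guard-region approximation for $I_{\rm D}$, the nearest-neighbour density for $R$, and the substitution $r=\pi\lambda_{\rm D}\rho^2$.

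Where you diverge is in closing the self-consistency for $E_{\rm D}$. The paper obtains \eqref{ED} by the most direct route: it differentiates the already-derived CDF \eqref{EqDistD2D} with respect to $p$, forms $E_{\rm D}=\int_0^\infty p^{2/\alpha} f_{P_{\rm D}}(p)\,\mathrm{d}p$, swaps the $p$- and $r$-integrals, and changes variables. Your route instead factors $P_{0,{\rm D}}^{2/\alpha}=\beta_{\rm D}^{2/\alpha}R^{2}|h_{0,{\rm D}}|^{-4/\alpha}(I+\sigma^2)^{2/\alpha}$ and pulls the moments apart, then recovers the interference-plus-noise moment from its Laplace transform. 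This is a legitimate alternative and is arguably more transparent about \emph{which} statistics of the environment feed into $E_{\rm D}$; however, it does not immediately yield the specific integral form \eqref{ED}, and matching the constants $A_1,A_2,A_3$ requires the integration-by-parts and rescaling you allude to (with a nonvanishing boundary contribution at $t\to 0$ to track). The paper's route is mechanically shorter to the stated equation because the CDF already packages $R$, $h$, and $I$ together; your route trades that compactness for a clearer probabilistic decomposition.
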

 
\begin{proof}
The proof is given in Appendix \ref{App1}.
\end{proof}

We call the derived CDF in Theorem \ref{Theorem1} an equilibrium distribution, since it is an equation to which all nodes converge through a dynamic adaptation algorithm expressed in \eqref{UpdateFM}. The convergence issue is carried out in \cite{Trans_Foschini_ASimple}, which necessitates the suitable selection of $\gamma$. In real communication networks, the power allocated to each UE is limited. Extending Theorem \ref{Theorem1} to the constrained power scenario is straightforward. Assume $P_{\rm max}$ is the maximum value for $P_{\rm D}$ and define $P_{\rm D}^{\rm c}=\min\left\{P_{\rm D},P_{\rm max}\right\}$, where the superindex `c' denotes the constrained power case. The new CDF of D2D transmit power can be written as
\begin{equation} \label{EqDistD2DConst}
\mathbb{P}\left\{P_{\rm D}^{\rm c}\leq p\right\} =
\begin{cases}
\mathbb{P}\left\{P_{{\rm D}}\leq p\right\}, & p < P_{\rm max},\\ 
1, & p \geq P_{\rm max}.
\end{cases}
\end{equation}

A closed form solution for the D2D power distribution in the interference-limited case $\left(\sigma^2=0\right)$ is given in Corollary \ref{Corollary1}.
\begin{corollary} \label{Corollary1}
In an interference-limited scenario, the CDF of the D2D transmit power is simplified to
\begin{equation} \label{Eq.Cor1}
\mathbb{P}\left\{P_{{\rm D}}\leq p\right\} = \frac{1}{1+(\frac{\beta_{\rm D}}{\mu p})^{2/{\alpha}} \frac{1}{\mathrm{sinc}(2/{\alpha})}(E_{\rm D}+(\frac{\lambda_{\rm C}}{\lambda_{\rm D}})E_{\rm C})}.
\end{equation}
\end{corollary}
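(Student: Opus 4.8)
The plan is to specialize the general integral representation \eqref{EqDistD2D} of Theorem \ref{Theorem1} to the noiseless regime. The key observation is that the entire noise contribution is isolated in the constant $k_2$, which carries the factor $\sigma^2$ as seen in \eqref{Constants1}, whereas $k_1$ contains no dependence on $\sigma^2$. Setting $\sigma^2 = 0$ therefore forces $k_2 = 0$ while leaving $k_1$ untouched, and this is exactly the simplification that makes the integral tractable in closed form.

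First I would substitute $k_2 = 0$ into \eqref{EqDistD2D}, which removes the $r^{\alpha/2}$ term from the exponent and collapses the integral to the elementary exponential
\begin{equation*}
\mathbb{P}\left\{P_{\rm D}\leq p\right\} = \int_0^{\infty} {\rm e}^{-k_1 r} \, \mathrm{d}r = \frac{1}{k_1}.
\end{equation*}
Convergence of this integral is guaranteed because $k_1 \geq 1 > 0$, which is immediate from \eqref{Constants1}: the remaining factors $(\beta_{\rm D}/\mu p)^{2/\alpha}$, $1/\mathrm{sinc}(2/\alpha)$, $E_{\rm D}$, and $E_{\rm C}$ are all nonnegative, so no additional side conditions are needed for the antiderivative to exist.

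Finally I would insert the explicit form of $k_1$ from \eqref{Constants1} into $1/k_1$, which directly reproduces \eqref{Eq.Cor1}. The point worth emphasizing is that this corollary requires no genuinely new analysis: the analytical work has already been discharged in the proof of Theorem \ref{Theorem1}, and the interference-limited case is precisely the one in which the troublesome $r^{\alpha/2}$ term in the exponent disappears, leaving a standard exponential with a closed-form antiderivative. Consequently I do not anticipate any real obstacle here beyond correctly identifying that $k_2$ is the sole repository of the noise power, so that the rest of the expression passes through the specialization unchanged.
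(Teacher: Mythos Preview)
Your proposal is correct and is exactly the route the paper takes (implicitly): the corollary is stated without proof precisely because setting $\sigma^2=0$ forces $k_2=0$, after which \eqref{EqDistD2D} reduces to $\int_0^{\infty}{\rm e}^{-k_1 r}\,\mathrm{d}r=1/k_1$, and substituting $k_1$ from \eqref{Constants1} gives \eqref{Eq.Cor1}.
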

It should be noted that the CDF goes to $0$ for $p\to 0$ or $\beta_{\rm D} \to \infty$, and to $1$ for $p\to \infty$ or $\beta_{\rm D} \to 0$, as expected.

In relatively lossy environments $\left(\alpha = 4\right)$, the integral of Theorem \ref{Theorem1} can be expressed in a more compact form using the standard complementary error function \cite{Trans_Chiani_New}:
\begin{equation}
\mathrm{erfc}\left(x\right)=\frac{2}{\sqrt{\pi}}\int_x^{\infty}{\rm e}^{-u^2}\,\mathrm{d}u \approx \frac{1}{6}{\rm e}^{-x^2}+\frac{1}{2}{\rm e}^{-4x^2/3}.
\end{equation}
The result is stated as follows.

\begin{corollary} \label{Corollary2}
In a relatively lossy environment with $\alpha=4$, the CDF of the D2D transmit power is simplified to
\small
\begin{equation} \label{Eq.Cor.4.2}
\mathbb{P}\left\{P_{{\rm D}}\leq p\right\} \!=\! \frac{1}{2}\sqrt{\frac{\pi}{k'_2}}{\rm e}^{\frac{{k'_1}^2}{4k'_2}}\mathrm{erfc}\!\!\left(\!\!\sqrt{\frac{{k'_1}^2}{4k'_2}}\right) \!\!\approx\!\frac{1}{12}\sqrt{\!\frac{\pi}{k'_2}}\left(\!1\!+\!3{\rm e}^{-\frac{{k'_1}^2}{12k'_2}} \right)\!,
\end{equation}
\normalsize
where the constants $k'_1$ and $k'_2$ are defined as
\begin{equation*}
k'_1 = 1+\frac{\pi}{2}\sqrt{\frac{\beta_{\rm D}}{\mu p}}\left(E_{\rm D}+\left(\frac{\lambda_{\rm C}}{\lambda_{\rm D}}\right) E_{\rm C}\right), ~~
k'_2 = \frac{\beta_{\rm D}}{\mu p} \frac{ \sigma^2}{\left(\pi\lambda_{\rm D}\right)^2}.
\end{equation*}
\end{corollary}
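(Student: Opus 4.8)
The plan is to derive Corollary~\ref{Corollary2} directly from Theorem~\ref{Theorem1} by specializing to $\alpha=4$, after which the only substantive work is the evaluation of a single Gaussian-type integral. First I would set $\alpha=4$ in \eqref{EqDistD2D}, so that $\alpha/2=2$ and $2/\alpha=1/2$; the integrand $\exp(-k_1 r-k_2 r^{\alpha/2})$ then collapses to $\exp(-k'_1 r-k'_2 r^2)$. With the normalized convention $\mathrm{sinc}(x)=\sin(\pi x)/(\pi x)$ one has $\mathrm{sinc}(1/2)=2/\pi$, hence $1/\mathrm{sinc}(1/2)=\pi/2$, and substituting $\alpha=4$ into \eqref{Constants1} reproduces verbatim the stated constants $k'_1$ and $k'_2$. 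The CDF is thereby reduced to $\int_0^{\infty}{\rm e}^{-k'_2 r^2-k'_1 r}\,\mathrm{d}r$.

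Next I would evaluate this integral by completing the square in the exponent,
\begin{equation*}
-k'_2 r^2-k'_1 r=-k'_2\Bigl(r+\tfrac{k'_1}{2k'_2}\Bigr)^{\!2}+\frac{{k'_1}^2}{4k'_2}.
\end{equation*}
Factoring the constant ${\rm e}^{{k'_1}^2/(4k'_2)}$ out of the integral and substituting $u=\sqrt{k'_2}\,\bigl(r+k'_1/(2k'_2)\bigr)$ sends the lower limit $r=0$ to $u_0=\sqrt{{k'_1}^2/(4k'_2)}$ and reduces the remaining integral to $\tfrac{1}{\sqrt{k'_2}}\int_{u_0}^{\infty}{\rm e}^{-u^2}\,\mathrm{d}u$. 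Identifying this tail integral with $\tfrac{\sqrt{\pi}}{2}\,\mathrm{erfc}(u_0)$ through the definition quoted just before the corollary gives the exact expression on the left of \eqref{Eq.Cor.4.2}.

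Finally, to obtain the compact approximation I would insert $\mathrm{erfc}(x)\approx\tfrac16{\rm e}^{-x^2}+\tfrac12{\rm e}^{-4x^2/3}$ with $x^2={k'_1}^2/(4k'_2)$. The decisive simplification is that the prefactor ${\rm e}^{x^2}$ cancels the $-x^2$ in the first term and reduces $-4x^2/3$ to $-x^2/3$ in the second, leaving $\tfrac16+\tfrac12{\rm e}^{-{k'_1}^2/(12k'_2)}$; factoring out $\tfrac16$ then yields the right-hand side of \eqref{Eq.Cor.4.2}. I do not anticipate a genuine obstacle, since the derivation is mechanical once $\alpha=4$ is fixed; the only steps demanding care are the evaluation $\mathrm{sinc}(1/2)=2/\pi$ and the correct transformation of the lower integration limit under the substitution.
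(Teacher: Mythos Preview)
Your proposal is correct and follows exactly the approach the paper itself implies: the corollary is stated without proof, but the paper introduces the integral form \eqref{EqDistD2D} in Theorem~\ref{Theorem1} and the $\mathrm{erfc}$ approximation immediately beforehand, making clear that the intended derivation is precisely the specialization to $\alpha=4$, the completion of the square to produce the Gaussian tail integral, and the substitution of the Chiani approximation. Your verification of $\mathrm{sinc}(1/2)=2/\pi$, the handling of the lower limit under the substitution, and the cancellation in the approximation step are all accurate.
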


Proposition \ref{Proposition1} gives a tight upper-bound for the CDF of the D2D transmit power.

\begin{proposition} \label{Proposition1}

The CDF of the D2D transmit power expressed in \eqref{EqDistD2D} is upper-bounded as
\begin{equation} \label{EqUpperD2D}
\mathbb{P}\left\{P_{{\rm D}}\leq p\right\} \! \leq \! \left(\frac{1}{u k_1}\right)^{\!\!1/u} \!\!\left(\Gamma\! \left(\frac{2}{\alpha}+1\right)\!\left(\frac{1}{v k_2}\right)^{\!\frac{2}{\alpha}}\right)^{\!\!1/v}\!\!,
\end{equation}
where $\Gamma(t)=\int_0^{\infty}x^{t-1}{\rm e}^{-x}\,\mathrm{d}x$ is the Gamma function and $u$ and $v$ are constants satisfying the conditions below.
\begin{itemize}
\item $\frac{1}{u}+\frac{1}{v}=1,$
\item $u^{\alpha}-2u^{\alpha-1}+u^{\alpha-2}=\frac{\left(\frac{k_2}{{\rm e}}\right)^2}{\left(\frac{k_1}{{\rm e}}\Gamma\left(\frac{2}{\alpha}+1\right)\right)^{\alpha}}.$
\end{itemize}
\end{proposition}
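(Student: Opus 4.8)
The plan is to read the integral representation of the CDF in \eqref{EqDistD2D} as the $L^1$ norm of a product of two positive factors and to estimate it by Hölder's inequality. Splitting the integrand as $\mathrm{e}^{-k_1 r - k_2 r^{\alpha/2}} = \mathrm{e}^{-k_1 r}\,\mathrm{e}^{-k_2 r^{\alpha/2}}$, Hölder's inequality for any conjugate pair $u,v$ with $\frac1u+\frac1v=1$ gives
\[
\int_0^\infty \mathrm{e}^{-k_1 r}\,\mathrm{e}^{-k_2 r^{\alpha/2}}\,\mathrm{d}r \le \left(\int_0^\infty \mathrm{e}^{-u k_1 r}\,\mathrm{d}r\right)^{\!1/u}\!\left(\int_0^\infty \mathrm{e}^{-v k_2 r^{\alpha/2}}\,\mathrm{d}r\right)^{\!1/v}.
\]
This single inequality already produces a bound of exactly the stated shape, valid for \emph{every} admissible pair $(u,v)$; the second bullet will then be invoked to select the sharpest one.

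First I would evaluate the two factor integrals. The first is elementary, $\int_0^\infty \mathrm{e}^{-u k_1 r}\,\mathrm{d}r = 1/(u k_1)$, giving the factor $\left(1/(uk_1)\right)^{1/u}$. For the second I would substitute $s=r^{\alpha/2}$, so that $\mathrm{d}r = \frac{2}{\alpha}s^{2/\alpha-1}\,\mathrm{d}s$, turning it into the Gamma integral $\frac{2}{\alpha}\int_0^\infty s^{2/\alpha-1}\mathrm{e}^{-v k_2 s}\,\mathrm{d}s = \frac{2}{\alpha}\Gamma(2/\alpha)(v k_2)^{-2/\alpha}$; the identity $\Gamma(t+1)=t\,\Gamma(t)$ collapses this to $\Gamma(2/\alpha+1)\left(1/(vk_2)\right)^{2/\alpha}$. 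Substituting both evaluations back reproduces the right-hand side of \eqref{EqUpperD2D} verbatim, which completes the proof of the inequality itself.

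The remaining task, which I expect to be the main obstacle, is to derive the second condition; this is precisely the first-order optimality condition that minimizes the Hölder bound over the conjugate pair, with the polynomial on its left-hand side being the factorization $u^{\alpha}-2u^{\alpha-1}+u^{\alpha-2}=u^{\alpha-2}(u-1)^2$. Eliminating $v=u/(u-1)$ and taking the logarithm of the right-hand side of \eqref{EqUpperD2D} as a function of $u$ alone, I would differentiate and set the derivative to zero. Collecting the resulting $\ln u$ and $\ln(u-1)$ terms, multiplying through by $-\alpha$, and using $\alpha\cdot(2/\alpha)=2$ converts the stationarity equation into
\[
(\alpha-2)\ln u + 2\ln(u-1) = 2\ln k_2 - \alpha\ln\!\big(k_1\Gamma(2/\alpha+1)\big) + (\alpha-2),
\]
and exponentiating then yields $u^{\alpha-2}(u-1)^2 = (k_2/\mathrm{e})^2 / \big((k_1/\mathrm{e})\Gamma(2/\alpha+1)\big)^{\alpha}$, i.e.\ the stated relation after expanding the left side.

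The delicate point in this last step is the bookkeeping: both exponents $1/u$ and $1/v$ depend on $u$, so differentiating the $u\ln u$- and $(1-t)\ln(1-t)$-type contributions produces additive constants that are exactly the source of the $\mathrm{e}^{\alpha-2}$ factor, and a sign slip there would misplace that factor. I would also verify that the stationary point lies in the admissible range $u>1$ and is a minimum, and note that the right-hand side of the condition is positive, so a valid conjugate pair $(u,v)$ indeed exists.
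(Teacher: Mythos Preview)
Your proposal is correct and follows essentially the same route as the paper: apply H\"older's inequality to the factorization $\mathrm{e}^{-k_1 r}\cdot\mathrm{e}^{-k_2 r^{\alpha/2}}$, evaluate the two resulting integrals (the second via the Gamma function), and then optimize the bound over the conjugate exponent $u$ to obtain the second bullet. Your write-up is in fact more explicit than the paper's, which simply asserts that differentiating the right-hand side and setting the derivative to zero yields the condition; your derivation of the stationarity equation and the identification of $u^{\alpha-2}(u-1)^2$ with the left-hand side, together with the $\mathrm{e}^{\alpha-2}$ factor on the right, checks out.
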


\begin{proof}
The proof is given in Appendix \ref{App2}.
\end{proof}

Via simulation, we show that the derived upper-bound is so tight that it can be used as a good approximation for \eqref{EqDistD2D}.

\section{Simulation Results} \label{4.Sim}
For numerical simulations, we consider a hexagonal cell in which D2D users are uniformly and independently distributed \cite{Book_Stoyan_Stochastic}. Initially, each D2D transmitter is associated to the nearest D2D receiver in its proximity. For the considered scenario, the inter-site distance is $500$\,m, the thermal noise is $-100$\,dB, the path loss is defined by $30.6+40 \log(d)$, where $d$ is measured in meters, and $P_{\rm max} =23 ~{\rm dBm}$. 
The power control method is the Foschini-Miljanic algorithm \cite{Trans_Foschini_ASimple} which converges after about $50$ iterations for $\gamma=0.06$ in the considered setup.


\begin{figure}[!t]
   \centering
   \includegraphics[width=0.5\textwidth]{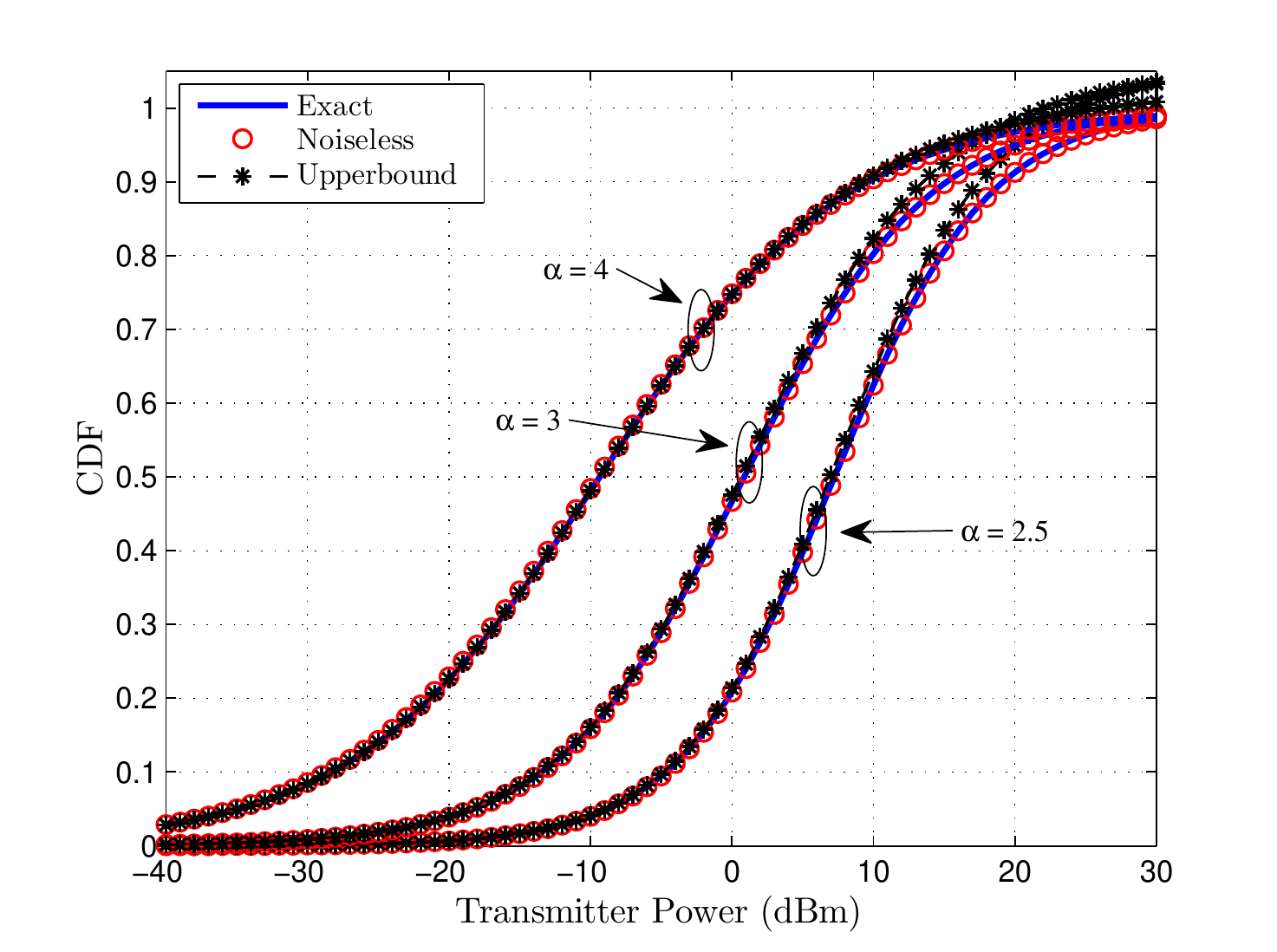}
   \vspace{-0.7cm}
   \caption{Comparison between the exact power CDF, its interference-limited case, and its derived upper-bound with  $\beta_{\rm D}=-10~{\rm dB}$ and various $\alpha$.}
   \vspace{-0.5cm}
   \label{D2DExacUpperNN}
\end{figure}

Fig. \ref{D2DExacUpperNN} shows the analytical converged CDF of D2D transmit power levels along with the special interference-limited case in \eqref{Eq.Cor1} and the derived upper-bound in \eqref{EqUpperD2D}. As seen in this figure, the assumption of ignoring noise in such networks is well justified and the upper-bound is demonstrated to be a good approximation.

In Fig. \ref{VarAlpha}, we show the analytical and simulated power CDF for different values of $\alpha$. In denser regions, i.e., for larger values of $\alpha$, the number of users with an allocated power of $p_0$ or less increases, as well. In other words, in order to manage interference in more populated regions, lower transmitting powers should be allocated to users.

\begin{figure}[!t]
   \centering
   \includegraphics[width=0.5\textwidth]{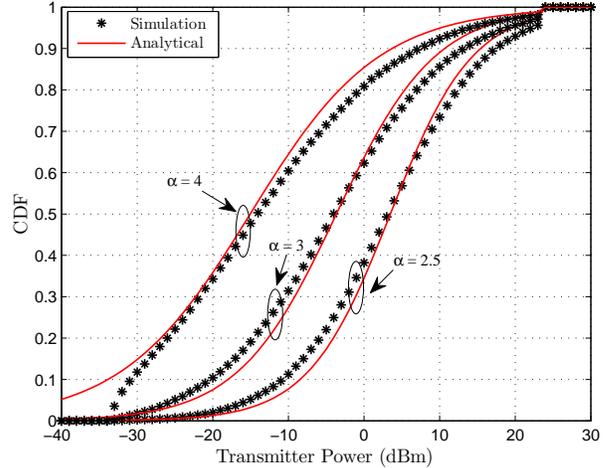}
\vspace{-0.7cm}
   \caption{Comparison between analytical and simulation results of D2D power CDF with $\beta_{\rm D}=-10~{\rm dB}$, $\mu=10^{-3}$, and various $\alpha$.}
\vspace{-0.5cm}
   \label{VarAlpha}
\end{figure}

The analytical and simulated power CDFs of D2D transmitters for a fixed $\alpha$ and various values of SINR threshold $\beta_{\rm D}$ is shown in Fig. \ref{VarThreshold}. For a specific transmission power level $p_0$, as the SINR threshold increases, fewer users can satisfy the SINR constraint, and thus, the probability that the power level at each D2D link is smaller than $p_0$ will decrease. Moreover, increasing the SINR threshold causes more users to reach the maximum power.

\begin{figure}[!t]
   \centering
   \includegraphics[width=0.5\textwidth]{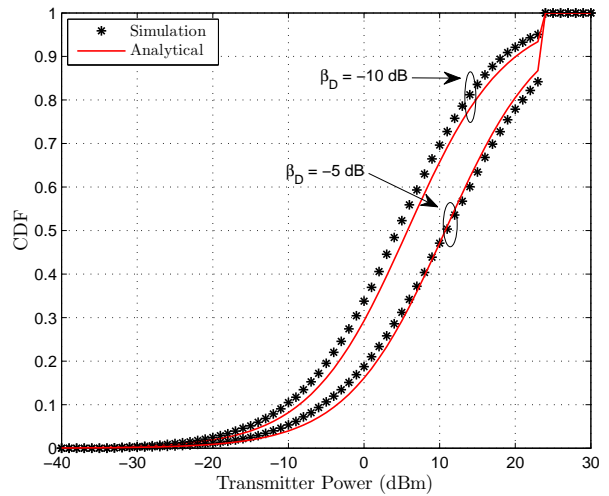}
\vspace{-0.7cm}
   \caption{Comparison between analytical and simulation results of D2D power CDF with $\alpha = 3$, $\mu=10^{-4}$, and different D2D SINR thresholds $\beta_{\rm D}$.}
\vspace{-0.5cm}
   \label{VarThreshold}
\end{figure}

\section{Conclusion} \label{Conclusion}
The analysis of power distribution in a D2D-underlaid cellular network in the uplink transmission is the focus of this letter. We considered an underlay scenario for spectrum sharing. Hence, D2D users may cause severe interference to the macro-cellular UEs and vice versa. Using stochastic geometry, we derived the distribution of D2D transmit power levels. Two special cases of the power CDF, i.e., the interference-limited case $\left(\sigma^2=0\right)$ and the relatively lossy environments $\left(\alpha=4\right)$ are analyzed. In addition, we derived a tight closed-form upper-bound for the D2D power CDF. Simulations showed a good correspondence between the analytical and numerical results, which corroborates the validity of our analysis.

\appendices

\section{Proof of Theorem \ref{Theorem1}} \label{App1}
From \eqref{EqSINR}, the CDF of $P_{\rm D}$ can be written as
\begin{equation*}
\mathbb{P}\left\{P_{{\rm D}}\leq p\right\} = \underset{x}{\mathbb{E}}\left \{{\rm e}^{-\sigma^2 \frac{\beta_{\rm D}\left\|x_{0,{\rm D}}\right\|^{\alpha}}{\mu p}} \underset{I}{\mathbb{E}}\left\{{\rm e}^{-I\frac{\beta_{\rm D}\left\|x_{0,{\rm D}}\right\|^{\alpha}}{\mu p}}\right\}\right\},
\end{equation*}
where $I=I_{\rm C}+I_{\rm D}$. Since the two nodes of a D2D pair are the nearest devices to each other, the probability density function (PDF) of $\left\|x_{0,{\rm D}}\right\|$ is given by \cite{Trans_Andrews_Tractable}
\begin{equation} \label{NearestDist}
f_{\left\|x_{0,{\rm D}}\right\|}(x)=2\pi \lambda_{\rm D} x {\rm e}^{-\pi \lambda_{\rm D} x^2}.
\end{equation}
Hence, we can write
\begin{equation*}
\mathbb{P}\left\{P_{{\rm D}}\leq p\right\} = \int_0^{\infty} \!\! 2\pi \lambda_{\rm D} x\, {\rm e}^{-\pi \lambda_{\rm D} x^2} {\rm e}^{-\frac{\beta_{\rm D} \sigma^2}{\mu p}x^{\alpha}} \mathcal{L}_I \left(\frac{\beta_{\rm D} x^{\alpha}}{\mu p}\right)\! \mathrm{d}x,
\end{equation*}
where $\mathcal{L}$ is the Laplace transform operator. Since $I_{\rm C}$ and $I_{\rm D}$ are independent, we have $\mathcal{L}_I \left(s\right) = \mathcal{L}_{I_{\rm C}} \left(s\right)\mathcal{L}_{I_{\rm D}} \left(s\right)$ and
\begin{equation} \label{IDCompute1}
\mathcal{L}_{I_{\rm C}} \left(s\right) = \mathbb{E}\left \{\prod\nolimits_{i\in \Phi_{\rm C}} {\rm e}^{-s P_{i,{\rm C}}\left|h_{i,{\rm C}}\right|^2\left\|x_{i,{\rm C}}\right\|^{-\alpha}} \right \}.
\end{equation}
Note that the transmitting and fading powers are independent of the location of the nodes, and thus, expectations over them can move into the product. This yields
\begin{align} \label{LaplaceIC}
\mathcal{L}_{I_{\rm C}}(s) &= \underset{x}{\mathbb{E}} \left \{ \prod\nolimits_{i\in \Phi_{\rm C}} \underset{P}{\mathbb{E}} \left \{ \frac{1}{1+sP_{i,{\rm C}}\left\|x_{i,{\rm C}}\right\|^{-\alpha}} \right \}  \right \}  \nonumber \\
& \overset{(a)}{=} \exp \left \{ -\lambda_{\rm C} \int_{\mathbb{R}^2} \left [1-\underset{P}{\mathbb{E}} \left \{ \frac{1}{1+sP_{i,{\rm C}} x^{-\alpha}} \right \} \right ]\, \mathrm{d}x \right \} \nonumber \\
& \overset{(b)}{=} \exp \left \{-\frac{1}{\mathrm{sinc}\left(2/{\alpha}\right)} \pi \lambda_{\rm C} s^{2/{\alpha}} \mathbb{E} \left \{ P_{\rm C}^{2/{\alpha}} \right \} \right \},
\end{align}
where $(a)$ follows from the Campbell's theorem for PPPs \cite{Book_Kingman_Poisson} and $(b)$ follows from the fact that expectation over $P$ and integration over $x$ are interchangeable. For the D2D interference, the Laplace transform of $I_{\rm D}$ is not exactly similar to \eqref{LaplaceIC}. However, we can approximate it as
\begin{align} \label{LaplaceID}
\mathcal{L}_{I_{\rm D}}(s) &= \exp \left \{ -\lambda_{\rm D} \int_{\mathcal{A}} \left [1-\underset{P}{\mathbb{E}} \left \{ \frac{1}{1+sP_{i,{\rm D}} x^{-\alpha}} \right \} \right ]\, \mathrm{d}x \right \} \nonumber \\
& \overset{(c)}{\approx} \exp \left \{-\frac{1}{\mathrm{sinc}\left(2/{\alpha}\right)} \pi \lambda_{\rm D} s^{2/{\alpha}} \mathbb{E} \left \{ P_{\rm D}^{2/{\alpha}} \right \} \right \},
\end{align}
where $\mathcal{A}={\mathbb{R}^2\backslash \mathcal{B}(\mathcal{O},\left\|x_{0,{\rm D}}\right\|)}$ and $\mathcal{B}\left(\mathcal{O},r\right)$  denotes a circle with center at $\mathcal{O}$ and radius of $r$. The approximation in $(c)$ is for ignoring the small effect of integration from $0$ to $\left\|x_{0,{\rm D}}\right\|$.

We then have
\small
\begin{equation*}
\mathcal{L}_I \left(s\right) = \exp \left \{-\frac{\pi s^{2/{\alpha}}}{\mathrm{sinc}\left(2/{\alpha}\right)} \left ( \lambda_{\rm C} \mathbb{E} \left \{ P_{\rm C}^{2/{\alpha}} \right \} + \lambda_{\rm D} \mathbb{E} \left \{ P_{\rm D}^{2/{\alpha}} \right \}\right ) \right\}.
\end{equation*}
\normalsize
Hence, the D2D power distribution turns out to be
\begin{align*}
\mathbb{P}\left\{P_{{\rm D}}\leq p\right\} =\int_0^{\infty} &2\pi \lambda_{\rm D} x {\rm e}^{-\pi \lambda_{\rm D} x^2} {\rm e}^{-\frac{\beta_{\rm D}}{\mu p} \sigma^2 x^{\alpha}} \nonumber \times  \\
&  {\rm e}^{-\frac{\pi}{\mathrm{sinc}\left(2/{\alpha}\right)} \left(\frac{\beta_{\rm D}}{\mu p}\right)^{2/\alpha}\left(\lambda_{\rm C} E_{\rm C}+\lambda_{\rm D} E_{\rm D}\right) x^2} \, \mathrm{d}x.
\end{align*}
The result expressed in \eqref{EqDistD2D} is obtained using $\pi \lambda_{\rm D} x^2 \rightarrow r$.

There is a \emph{circularity} in \eqref{EqDistD2D}, in which $E_{\rm D}$ depends on the PDF of $P_{\rm D}$, which is obtained from its CDF. In fact, we have
\begin{equation*}
E_{\rm D}=\mathbb{E}\left\{ P_{\rm D}^{2/{\alpha}}\right\}=\int_0^{\infty}x^{2/\alpha}f_{P_{0,{\rm D}}}(x)\,\mathrm{d}x.
\end{equation*}
Differentiating the CDF \eqref{EqDistD2D}, changing the order of integrations, and some changes of variables, we get \eqref{ED}. Since the analytical solution of \eqref{ED} is not explicit, it should be solved numerically.

\section{Proof of Proposition \ref{Proposition1}} \label{App2}
From Hölder's inequality \cite{Book_Jeffrey_Table}, eqn. (12.312), if $\left|f\left(x\right)\right|^u$ and $\left|g\left(x\right)\right|^v$ are two real-valued integrable functions on $\left[0,\infty\right]$ with $u>1$ and $\frac{1}{u}+\frac{1}{v}=1$, then
\small
\begin{equation*}
\int_0^{\infty}\!\!\!f\left(x\right)\!g\left(x\right)\,\mathrm{d}x\leq \left ( \int_0^{\infty} \left|f\left(x\right)\right|^u\,\mathrm{d}x \right )^{\!\!1/u}\!\! \left ( \int_0^{\infty} \left|g\left(x\right)\right|^v\,\mathrm{d}x \right )^{\!\!1/v}.
\end{equation*}
\normalsize
Letting $f\left(x\right)={\rm e}^{-a_i x}$ and $g\left(x\right)={\rm e}^{-b_i x^{\alpha/2}}$, we have  \cite{Book_Jeffrey_Table}, eqn. (3.326.1)
\small
\begin{equation} \label{Holder}
\int_0^{\infty}\!\!\!{\rm e}^{-a_i x-b_i x^{\alpha/2}}\mathrm{d}x \leq \left(\frac{1}{ua_i}\right)^{\!\!1/u}\!\! \left(\Gamma \!\left(\frac{2}{\alpha}+1\right) \left(\frac{1}{vb_i}\right)^{\frac{2}{\alpha}}\right)^{\!\!1/v}\!\!\!\!\!.
\end{equation}
\normalsize
Differentiating the right-hand side of \eqref{Holder} with respect to $u$ and setting the result to $0$, the second condition follows.

\bibliographystyle{IEEEtran}
\bibliography{PowDistD2D}
\end{document}